\newif\ifIncProofs

\newif\ifnotIncProofs
\ifIncProofs\notIncProofsfalse\else\notIncProofstrue\fi

\documentclass[prd,aps,amsfonts,showpacs,longbibliography,notitlepage,twocolumn,groupedaddress,superscriptaddress]{revtex4-2}
\usepackage{tikz} 
\usepackage{amsmath,amsfonts,amsthm, amssymb, amsthm, mathtools}
\usepackage{placeins}
\usepackage{dsfont}
\usepackage{csquotes}
\usepackage{amssymb}
\usepackage{verbatim}
\usepackage{bm}
\usepackage{stmaryrd}
\usepackage[caption=false]{subfig}
\usepackage{physics} 

\usepackage{hyperref}
\definecolor{darkred}  {rgb}{0.5,0,0}
\definecolor{darkblue} {rgb}{0,0,0.5}
\definecolor{darkgreen}{rgb}{0,0.5,0}
\hypersetup{
  colorlinks = true,
  urlcolor  = blue,         
  linkcolor = darkblue,     
  citecolor = darkgreen,    
  filecolor = darkred       
}

\theoremstyle{definition}

\newtheorem{corollary}{Corollary}
\newtheorem{definition}{Definition}

\newtheorem{lemma}{Lemma}
\newtheorem{proposition}{Proposition}
\newtheorem{theorem}{Theorem}

\newtheorem{remark}{Remark}

\newcommand{\GL}{\mathrm{GL}}
\newcommand{\U}{\mathrm{U}}

\newcommand{\mc}{\mathcal}
\newcommand{\mb}{\mathbb}
\newcommand{\supp}{\mathrm{supp}}
\newcommand{\Haar}{\mathrm{Haar}}
\newcommand{\diag}{\mathrm{diag}}
\newcommand{\poly}{\mathrm{poly}}

\begin{document}
\title{A Quantum Method of Types} 
\author{Arick Grootveld} 
\affiliation{Department of Electrical Engineering \& Computer Science, Syracuse University, Syracuse, NY, USA} \affiliation{Institute for Quantum \& Information Sciences, Syracuse University, Syracuse, NY, USA} 
\thanks{We thank B. Chen, V. Gandikota, J. Pollack, A. Maloney, P. Wu, and H. Yang for insightful conversations during the development of this manuscript.}

\begin{abstract} 

The method of types is a fundamental tool in classical information theory, with applications ranging from composite hypothesis testing and universal source coding to the capacity of arbitrarily varying channels. In this work we introduce an empirical operator acting as a quantum analog of the empirical distribution. 
We show that this empirical operator satisfies combinatorial and large-deviation bounds, which in combination describe a quantum method of types. As an application, we use our method to prove a universal achievability result for composite quantum hypothesis testing. 

\end{abstract} 
\maketitle

\section{Introduction}

The empirical distribution is a central object in classical probability and statistics, valued in part because it gives a data-driven description of a sample, requiring no assumptions or side information. 
The method of types in classical information theory comprises a set of bounds on empirical events, using the fact that the number of empirical distributions grows polynomially in the number of samples. 
The method of types is highly general, and is especially useful to address non-parametric problems in information theory \cite{Csiszar1998method}. 

Several works have proposed quantum versions of the method of types \cite{Harrow2005SchurTransform, Notzel2014Invariant, notzel2015class} with polynomially many outcomes. Unfortunately, these techniques either assume a basis of interest or focus solely on recovering the spectrum of the operator, meaning that they either require prior knowledge or do not estimate a complete description of the operator. 

Quantum state tomography is a leading candidate for a quantum analog of the empirical distribution, because it matches some of the important qualities: tomography protocols estimate all properties of the state at once, and require no prior knowledge. 
However, current optimal tomography schemes \cite{Keyl_2006, ODonnell2021, Haah_2017, o2016efficient, ODonnel2017EfficientII, pelecanos2025mixed} allow arbitrary unitaries when estimating the operator's eigenbasis, so that for finite $n$ there are uncountably many `empirical' operators.

We propose a discretization of Keyl's tomography protocol using unitary designs, and large-deviation behaviour governed by Keyl's rate (the reverse relative entropy). These empirical operators preserve the important traits of their classical counterpart, namely the number of empirical operators grows polynomially in $n$ and they become dense in the set of density operators, which we interpret as a quantum method of types. We demonstrate the utility of this method by establishing a quantum version of Sanov's theorem and an achievability result for general composite quantum hypothesis testing problems. Our result was inspired by the large-deviation treatments of tomography \cite{Keyl_2006, GMPW_2026} and the quantum empirical distribution described in \cite{hayashi2025another}.




\vspace{-2em}
\section{Preliminaries}

{\small
For background on the classical method of types, see \cite{Csiszar1998method}.
}

\vspace{-2em}

\subsection{Notation}
\label{subsec:Notation}

We denote by
\[
    \mc D_d
    :=
    \{\rho\in M_d(\mb C): \rho\ge 0,\ \Tr(\rho)=1\}
\]
the set of quantum states on \(\mb C^d\). We also denote by
\(\U(d)\) the group of unitary operators on \(\mb C^d\), by \(\GL(d)\) the group of invertible linear operators on \(\mb C^d\), by $\mc L(\mb C^d)$ the set of all linear operators, by $\mc B(\mb C^d)$ the set of bounded operators and by $\mc B(X, Y)$ the set of bounded linear maps from $X \to Y$.

Let
\[
    \Delta_d
    :=
    \left\{
        x=(x_1,\ldots,x_d)\in\mb R^d:
        x_i\ge 0,\ \sum_{i=1}^d x_i=1
    \right\}
\]
be the probability simplex, and let
\[
    \Delta_d^\downarrow
    :=
    \left\{
        x\in\Delta_d:
        x_1\ge x_2\ge \cdots \ge x_d
    \right\}
\]
be the set of probability vectors arranged in decreasing order. For any
vector \(\alpha\in\mb R^d\), we write \(\alpha^\downarrow\) for the vector
obtained by rearranging the entries of \(\alpha\) in decreasing order.
    
For any matrix $A$, define $\Delta_k(A)$ to be the determinant of the leading \(k\times k\) principal minor of $A$: suppose $A = \sum_{i,j=1}^d a_{ij} |i\rangle \langle j|$, denote $A_{k\times k} = \sum_{i,j=1}^k a_{ij} |i\rangle \langle j|$ and 
\begin{equation}\label{eq:determinant-principle-minor}
    \Delta_k(A) = \det A_{k\times k}.
\end{equation}

We will make use of the following divergences. Let $F$ be the quantum fidelity, $\|\cdot\|_1 $ the trace norm, and $D(\cdot\|\cdot)$ be the quantum relative entropy, so that 
\begin{align*}
    F(\sigma, \rho) &= \left(\Tr\left[ \left(\sigma^{1/2} \rho \sigma^{1/2}\right)^{1/2} \right]\right)^2\\
    \| \sigma - \rho \|_1 &= \Tr\left[\sqrt{(\sigma - \rho)^\dagger (\sigma - \rho)}\right] \\
    D(\sigma\|\rho) &= \begin{cases}
        \Tr[\sigma (\log \sigma - \log \rho)], & \supp(\sigma) \subseteq \supp(\rho)\\
        \infty, & \text{otherwise}
    \end{cases}
\end{align*}

We use $D_R(\sigma\|\rho)$ to denote the reverse quantum relative entropy \cite{Keyl_2006,audenaert2015alpha, hayashi2026operational, GMPW_2026}, so that for $\sigma = U \diag(x) U^\dagger$ where $x \in \Delta_d^\downarrow$, $U \in U(d)$, when $\rank(\sigma) = \rank(\sigma \rho)$,
{\small 
\begin{align}
    \label{eq:reverseRelativeEntropy_Def}
    D_R(\sigma\|\rho) = -H(x) - \sum_{k=1}^d (x_k - x_{k+1}) \log\Delta_k\left(U^\dagger \rho U\right),\\
    \text{ where } H(x) = -\sum_{k=1}^d x_k \log x_k, \qquad x_{d+1} := 0
\end{align}
}
and otherwise $D_R(\sigma\|\rho) = \infty$. 

We gather several properties of $D_R$ from the literature. 
\begin{remark}[Properties of the reverse relative entropy]
    \label{remark:PropsOfDR}
    \leavevmode
    \begin{enumerate}
        \item \label{item:DR_LSC} $D_R(\sigma\|\rho)$ is lower semicontinuous in $\sigma$ and  $\rho$ \cite{Keyl_2006}.

        \item \label{item:DR_ContWhenFinite} $D_R(\cdot \|\rho)$ is continuous on $\{\sigma \in \mc D_d: D_R(\sigma\|\rho) < \infty\}$ \cite[Theorem 3]{audenaert2015alpha}.
    
        \item \label{item:DR_FidRelation}
        From an intermediate result in \cite[Corollary 26]{hayashi2026operational}, we have 
        \begin{equation}
            \label{eq:DR_FidRelation}
            D_R(\sigma\|\rho) \geq -\log F(\sigma, \rho).
        \end{equation}
    \end{enumerate}
\end{remark}

\subsection{Schur-Weyl duality}
\label{subsec:SWDuality}

In this section we provide some useful results from representation theory, which can be found in standard references \cite{FultonHarris1991}. For resources tailored towards quantum information theory, see \cite{Harrow2005SchurTransform,hayashi2017}. 

Let \(d,n\in\mathbb N\), and write \([d]=\{1,\ldots,d\}\). 
We work with the \(n\)-partite Hilbert space
\[
    \mc H_n=(\mathbb C^d)^{\otimes n}.
\]
We use $\lambda \vdash_d n$ to denote a Young diagram with $n$ boxes and at most $d$ rows, and \(\ell(\lambda)\) the number of nonzero elements of \(\lambda\). Define $\Lambda_{n,d} = \{\lambda \vdash_d n\}$, and for each \(\lambda \in \Lambda_{n,d}\), let \((\mathcal P_\lambda,p_\lambda)\) denote the irreducible \(S_n\)-representation indexed by \(\lambda\), where \(\mathcal P_\lambda\) is the Specht module. Let \((\mathcal Q_\lambda,q_\lambda)\) denote the irreducible polynomial \(U(d)\)-representation of highest weight \(\lambda\), where \(\mathcal Q_\lambda\) is the Schur module. Schur--Weyl duality decomposes the tensor space as
\[
    \mathcal H_n
    =
    \bigoplus_{\lambda \vdash_d n}
    \mathcal P_\lambda\otimes \mathcal Q_\lambda .
\]
After choosing orthonormal bases of $\mc P_\lambda$ and $\mc Q_\lambda$, this
decomposition is implemented by a unitary change of coordinates, the
\textit{Schur transform},
\begin{equation}\label{eq:schur-transform-space}
    U_{\mathrm{Schur}}:
    (\mb C^d)^{\otimes n}
    \longrightarrow
    \bigoplus_{\lambda \vdash_d n}
    \mathcal P_\lambda\otimes \mathcal Q_\lambda.
\end{equation}

If one represents the direct sum using an explicit block-label register,  the Schur transform may equivalently be written as, 
\[
    U_{\mathrm{Schur}}
    \bigl(X^{\otimes n}\bigr)
    U_{\mathrm{Schur}}^\dagger
    =
    \sum_{\lambda \vdash_d n}
    |\lambda\rangle\langle\lambda|
    \otimes \textbf{1}_{\mc P_\lambda}\otimes q_\lambda(X),
\]
for \(X \in \mc L(\mathbb C^d)\).

It can be helpful to think of a Young diagram, $\lambda$, as an ordered vector of length $d$, whose elements sum to $n$. 
Indeed, the result of normalizing $\lambda$ via $\bar \lambda = \frac{1}{n} \lambda$, corresponds to an ordered empirical distribution with denominator $n$ \cite{Harrow2005SchurTransform}. Let $T_{n,d}$ denote the set of (unordered) empirical distributions, i.e. the set of classical types. With the standard action of the permutation group we have $S_d \Lambda_{n,d} \cong T_{n,d}$. 
We remark that as $n \to \infty$, $T_{n,d}$ becomes dense in $\Delta_d$ \cite{Cover2006elements}.
\begin{lemma}[Empirical distributions become dense]
    \label{lem:TypesAreDense}
    \[
        \lim_{n \to \infty} \sup_{p \in \Delta_d} \inf_{ q \in T_{n,d}} \|p -  q\|_1 = 0.
    \]
\end{lemma}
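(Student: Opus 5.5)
We prove the bound explicitly by constructing, for each $p \in \Delta_d$, a type $q \in T_{n,d}$ with $\|p-q\|_1 \le d/n$ (up to a constant). This gives a uniform bound that tends to zero. Recall that $T_{n,d}$ consists of vectors $q \in \Delta_d$ whose entries are of the form $q_i = k_i/n$ with $k_i \in \mb{Z}_{\ge 0}$ and $\sum_i k_i = n$.

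\textbf{Step 1: round down.} Given $p \in \Delta_d$, set $k_i := \lfloor n p_i \rfloor$ for $i = 1, \ldots, d-1$. Then $0 \le n p_i - k_i < 1$.

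\textbf{Step 2: put the remainder in the last coordinate.} Set $k_d := n - \sum_{i<d} k_i$. Since $\sum_{i<d} k_i \le n \sum_{i<d} p_i \le n$, we have $k_d \ge 0$, so $q := k/n$ lies in $T_{n,d}$. Moreover
\[
    k_d - n p_d = \sum_{i<d} \bigl(n p_i - k_i\bigr) \in [0, d-1),
\]
because the coordinates of $p$ sum to one.

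\textbf{Step 3: sum the errors.} Adding the coordinatewise deviations,
\[
    \|p - q\|_1 = \frac{1}{n} \sum_{i=1}^d |n p_i - k_i| < \frac{(d-1) + (d-1)}{n} = \frac{2(d-1)}{n}.
\]
This bound does not depend on $p$, so
\[
    \sup_{p \in \Delta_d} \inf_{q \in T_{n,d}} \|p - q\|_1 \le \frac{2(d-1)}{n} \longrightarrow 0 \quad \text{as } n \to \infty,
\]
since $d$ is fixed.

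The only point requiring care is that the rounded vector must lie exactly in $T_{n,d}$: it needs nonnegative integer entries summing to $n$. Absorbing the rounding remainder into a single coordinate, as in Step 2, secures this. That coordinate can have error up to $d-1$ in place of $1$, but the total error is still $O(d/n)$.
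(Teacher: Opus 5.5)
Your argument is correct. The paper gives no proof of this lemma: it treats it as a standard fact and cites Cover and Thomas. Your floor-and-absorb-the-remainder construction supplies that omitted elementary justification, and it also gives an explicit uniform rate $\|p-q\|_1 \le 2(d-1)/n$, which the paper's statement does not record.

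One cosmetic point: for $d=1$, the strict inequality $\|p-q\|_1 < 2(d-1)/n$ reads $0<0$. Using $\le$ throughout, as you already do in the final display, covers that trivial case.
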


We will make use of the following lemmas.
\begin{lemma}[Schur orthogonality]\label{lemma:Schur-orthogonality}
Fix $\lambda \vdash_d n$. For every \(X_\lambda\in \mc L(\mathcal Q_\lambda)\),
\begin{equation}
    \int_{\mathcal U(d)}
    q_\lambda(U)\,X_\lambda\,q_\lambda(U)^\dagger\, d\mu_{\Haar}(U)
    =
    \frac{\Tr(X_\lambda)}{\dim \mathcal Q_\lambda}\,
    \mathbf 1_{\mathcal Q_\lambda}.
\end{equation}
\end{lemma}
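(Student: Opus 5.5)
Let $M := \int_{\U(d)} q_\lambda(U)\,X_\lambda\,q_\lambda(U)^\dagger\, d\mu_{\Haar}(U)$. The plan is the standard Schur's-lemma argument: first show that $M$ commutes with the whole image $q_\lambda(\U(d))$, then use irreducibility to conclude $M$ is a scalar, and finally fix the scalar by taking traces.

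\emph{Step 1 (invariance).} For any $V\in\U(d)$, we have $q_\lambda(V)\,M\,q_\lambda(V)^\dagger = \int q_\lambda(VU)\,X_\lambda\,q_\lambda(VU)^\dagger\,d\mu_{\Haar}(U)$, because $q_\lambda$ is a homomorphism and $q_\lambda(V)^\dagger=q_\lambda(V^\dagger)$. Left invariance of Haar measure (substitute $U\mapsto V^\dagger U$) shows this equals $M$. Hence $M$ lies in the commutant of $q_\lambda(\U(d))$. The integral is well defined since $U\mapsto q_\lambda(U)$ is continuous, indeed polynomial, on the compact group. Unitarity of $q_\lambda$ on unitaries needs the inner product on $\mc Q_\lambda$ to be invariant; this is automatic here, because $\mc Q_\lambda$ sits as a tensor factor inside $\mc H_n$ via the unitary Schur transform and $U^{\otimes n}$ is unitary.

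\emph{Step 2 (Schur's lemma).} Since $(\mc Q_\lambda,q_\lambda)$ is an irreducible complex representation, Schur's lemma gives that every operator commuting with all $q_\lambda(V)$ is a multiple of $\mathbf 1_{\mc Q_\lambda}$. Concretely, $M$ has an eigenvalue $c$ over $\mb C$. The eigenspace $\ker(M-c\mathbf 1)$ is nonzero and invariant, so by irreducibility it is all of $\mc Q_\lambda$, and $M=c\,\mathbf 1_{\mc Q_\lambda}$. This step is the only nontrivial ingredient, and the only potential obstacle is confirming irreducibility of $q_\lambda$ restricted to $\U(d)$ rather than $\GL(d)$. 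That holds either by the paper's definition of $q_\lambda$ as an irreducible $\U(d)$-representation, or by Zariski density of $\U(d)$ in $\GL(d)$ for polynomial representations.

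\emph{Step 3 (the constant).} Taking traces and using linearity of the trace under the integral together with cyclicity gives $\Tr M = \int \Tr\big(q_\lambda(U)^\dagger q_\lambda(U) X_\lambda\big)\,d\mu_{\Haar} = \Tr X_\lambda$, since Haar measure is normalized. On the other hand $\Tr M = c\dim\mc Q_\lambda$, so $c=\Tr(X_\lambda)/\dim\mc Q_\lambda$, which is the claimed formula.
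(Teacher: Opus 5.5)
Your proof is correct. The paper does not prove this lemma itself; it cites it as a standard fact from the representation-theory references (e.g.\ Fulton--Harris), and your argument (twirl invariance, then Schur's lemma via irreducibility of $q_\lambda$ on $\U(d)$, then fixing the scalar by taking traces, with unitarity of $q_\lambda(U)$ justified through the Schur transform) is exactly the standard proof behind that citation.
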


We note several standard estimates for the size of the dimensions of irreducible representations.
\begin{lemma}[Representation-theoretic estimates]
\label{lemma:rep-data}

For \(\lambda\vdash_d n\), let \(\bar\lambda:=\lambda/n\). Then
\begin{equation}\label{eq:dimension-estimates}
    1 \leq \dim \mathcal Q_\lambda
    \le
    (n+1)^{d(d-1)/2},
\end{equation}
as well as
\begin{equation}\label{eq:specht-dimension-estimates}
    (n+d)^{-d(d+1)/2} e^{nH\left(\bar\lambda\right)}
    \le
    \dim \mathcal P_\lambda
    \le
    e^{nH\left(\bar\lambda\right)}.
\end{equation}
\end{lemma}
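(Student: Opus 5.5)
We prove the two estimates separately, using standard formulas for the dimensions of Schur and Specht modules.

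\emph{Schur module.} The lower bound $\dim\mathcal Q_\lambda\ge 1$ is immediate, since $\mathcal Q_\lambda$ is a nonzero irreducible representation. For the upper bound we use the Weyl dimension formula,
\begin{equation*}
    \dim\mathcal Q_\lambda=\prod_{1\le i<j\le d}\frac{\lambda_i-\lambda_j+j-i}{j-i}.
\end{equation*}
For each pair $i<j$ we have $\frac{\lambda_i-\lambda_j+j-i}{j-i}\le 1+\frac{\lambda_i-\lambda_j}{j-i}\le 1+n$. There are $d(d-1)/2$ such pairs, so $\dim\mathcal Q_\lambda\le (n+1)^{d(d-1)/2}$.

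\emph{Specht module, upper bound.} Let $\mu$ be the weight with multiplicities $\lambda_1,\dots,\lambda_d$. The permutation module $M^\lambda$, spanned by the words of type $\mu$, contains $\mathcal P_\lambda$ with multiplicity one (Young's rule, Kostka number $K_{\lambda\lambda}=1$). Hence
\begin{equation*}
    \dim\mathcal P_\lambda\le\binom{n}{\lambda_1,\dots,\lambda_d}\le e^{nH(\bar\lambda)},
\end{equation*}
where the last inequality is the classical type-class bound $|T_{\bar\lambda}|\le e^{nH(\bar\lambda)}$.

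\emph{Specht module, lower bound.} This is the main step. I would compare the multinomial coefficient with $\dim\mathcal P_\lambda$ through the hook-length formula, written in the form
\begin{equation*}
    \dim\mathcal P_\lambda=\frac{n!}{\prod_i(\lambda_i+d-i)!}\prod_{i<j}(\lambda_i-\lambda_j+j-i).
\end{equation*}
The second product is at least $1$. Each factor of the first satisfies $(\lambda_i+d-i)!\le \lambda_i!\,(n+d)^{d-i}$, so
\begin{equation*}
    \dim\mathcal P_\lambda\ge\binom{n}{\lambda}(n+d)^{-d(d-1)/2}.
\end{equation*}
Combining this with the classical bound $\binom{n}{\lambda}\ge (n+1)^{-d}e^{nH(\bar\lambda)}$ gives a total loss of $(n+d)^{-d(d-1)/2-d}\ge (n+d)^{-d(d+1)/2}$, which is the claimed lower bound.

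An alternative route to the lower bound goes through Schur–Weyl duality. Decompose $M^\lambda=\bigoplus_{\nu\trianglerighteq\lambda}K_{\nu\lambda}\mathcal P_\nu$ and use $K_{\nu\lambda}\le\dim\mathcal Q_\nu$. However, this controls the multinomial in terms of $\mathcal P_\nu$ for dominating $\nu$, not in terms of $\mathcal P_\lambda$ itself, so it does not directly give the lower bound. The hook-length computation above is the cleaner path.
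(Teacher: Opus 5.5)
Your proof is correct and complete. The Weyl dimension formula gives the $\mathcal Q_\lambda$ bound. The embedding $\mathcal P_\lambda\subseteq M^\lambda$ (Kostka number $K_{\lambda\lambda}=1$) together with the type-class bound gives the upper Specht bound. The Frobenius formula combined with $\binom{n}{\lambda}\ge (n+1)^{-d}e^{nH(\bar\lambda)}$ gives the lower bound, and $d(d-1)/2+d=d(d+1)/2$ reproduces the stated exponent exactly.

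The paper gives no proof of its own; it only quotes these as standard estimates. Your derivation is the standard argument that fills in what it cites. The closing paragraph on the Kostka-decomposition route adds nothing and can be dropped.
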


We use \(|\phi_\lambda\rangle\) to denote the highest-weight vector of the
irreducible representation \(q_\lambda\). The standard highest-weight matrix coefficient formula \cite[Sect. IX.8]{simon1996representations} gives us 
\begin{lemma}[Highest Weight Projection]
    \label{lem:HighestWeightProj}
    For \(\tau \in \mc D_d\), 
    \begin{equation}\label{eq:highest-weight-minor-formula}
        \bra{\phi_\lambda} q_\lambda(\tau) \ket{\phi_\lambda}
        =
        \prod_{k=1}^d
        \Delta_k(\tau)^{\lambda_k-\lambda_{k+1}},
        \qquad
        \lambda_{d+1}:=0.
    \end{equation}
\end{lemma}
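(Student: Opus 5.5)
We prove Lemma~\ref{lem:HighestWeightProj} by reducing from all of \(\GL(d)\) to the Borel subgroup through a Gauss (LU) decomposition, and then extending by density and continuity. The key point is that \(\ket{\phi_\lambda}\) behaves simply under triangular matrices. Take the standard basis and the Borel subgroup of upper-triangular matrices. A diagonal matrix \(D=\diag(t_1,\dots,t_d)\) acts on the highest-weight vector by the character \(q_\lambda(D)\ket{\phi_\lambda}=\prod_k t_k^{\lambda_k}\ket{\phi_\lambda}\). Strictly upper-triangular unipotent matrices \(N_+\) fix \(\ket{\phi_\lambda}\), since the raising operators annihilate it. Because \(q_\lambda\) is a \(\ast\)-representation (\(q_\lambda(X^\dagger)=q_\lambda(X)^\dagger\), as it is the restriction of \(X^{\otimes n}\)), strictly lower-triangular unipotents \(N_-\) fix \(\bra{\phi_\lambda}\) from the left: \(\bra{\phi_\lambda}q_\lambda(N_-)=\bra{\phi_\lambda}\).

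Next, suppose \(\tau\) has all leading principal minors \(\Delta_k(\tau)\neq 0\). Then \(\tau\) admits a Gauss decomposition \(\tau=N_- D N_+\), where \(N_\mp\) are unit lower/upper triangular and \(D=\diag(t_1,\dots,t_d)\) with \(t_k=\Delta_k(\tau)/\Delta_{k-1}(\tau)\) and \(\Delta_0:=1\). This is the standard fact that leading minors are invariant under \(N_-\cdot N_+\). Since \(q_\lambda\) is multiplicative on \(\GL(d)\),
\[
\bra{\phi_\lambda}q_\lambda(\tau)\ket{\phi_\lambda}=\bra{\phi_\lambda}q_\lambda(N_-)q_\lambda(D)q_\lambda(N_+)\ket{\phi_\lambda}=\prod_{k=1}^d t_k^{\lambda_k}.
\]
Substituting \(t_k=\Delta_k/\Delta_{k-1}\) and telescoping gives \(\prod_k \Delta_k^{\lambda_k}\prod_k\Delta_k^{-\lambda_{k+1}}=\prod_k\Delta_k(\tau)^{\lambda_k-\lambda_{k+1}}\), which is exactly \eqref{eq:highest-weight-minor-formula}.

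It remains to handle density operators with some vanishing minor. Both sides are polynomial in the entries of \(\tau\): the left side because \(q_\lambda\) is a polynomial representation, and the right side because \(\lambda_k-\lambda_{k+1}\geq 0\), so every exponent is a nonnegative integer. The set where all \(\Delta_k\neq 0\) is Zariski dense in \(M_d(\mb C)\), so the identity holds for every \(X\in\mc L(\mb C^d)\), and in particular for every \(\tau\in\mc D_d\), including singular ones. One should read \(0^0=1\) when \(\lambda_k=\lambda_{k+1}\).

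The main obstacle is justifying the facts about the highest-weight vector that the argument relies on: that \(\ket{\phi_\lambda}\) is the normalized weight-\(\lambda\) vector fixed by the upper unipotent subgroup, and the \(\ast\)-compatibility of \(q_\lambda\) that lets us move \(N_-\) onto the bra. These follow from the definition of the highest weight in the Schur module together with the fact that \(q_\lambda(X)\) is a compression of \(X^{\otimes n}\). Alternatively, one can realize \(\ket{\phi_\lambda}\) concretely as the normalized antisymmetrized vector \(\bigotimes_{j}(e_1\wedge\cdots\wedge e_{\lambda'_j})\), where \(\lambda'\) is the conjugate partition. On this vector \(X^{\otimes n}\) acts through \(\bigotimes_j \wedge^{\lambda'_j}X\). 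Each diagonal coefficient \(\langle e_1\wedge\cdots\wedge e_m|\wedge^m X|e_1\wedge\cdots\wedge e_m\rangle\) equals \(\Delta_m(X)\), and counting columns of length \(m\) gives the exponent \(\lambda_m-\lambda_{m+1}\). This route gives a direct check of the formula without the Gauss decomposition.
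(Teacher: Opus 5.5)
Your proof is correct. It also supplies something the paper lacks: the paper gives no argument for this lemma and simply imports it as ``the standard highest-weight matrix coefficient formula'' from Simon, Sect.~IX.8.

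Your Gauss-decomposition route is the standard argument for formulas of this kind. The function $X\mapsto\bra{\phi_\lambda}q_\lambda(X)\ket{\phi_\lambda}$ is left $N_-$-invariant and right $N_+$-invariant, and equals $\prod_k t_k^{\lambda_k}$ on the diagonal torus, so it is determined on the big cell. Over the bare citation, it makes explicit several conventions the paper uses silently:
\begin{itemize}
\item $\ket{\phi_\lambda}$ is the weight-$\lambda$ vector fixed by the upper-triangular unipotents, which is exactly why the top-left minors $\Delta_k$ appear (the opposite Borel would give bottom-right minors).
\item $q_\lambda$ extends multiplicatively and $\ast$-compatibly to all of $\mc L(\mb C^d)$ through $X^{\otimes n}$.
\item The identity survives for singular $\tau$ by polynomiality, with $0^0=1$.
\end{itemize}
The last point matters, because the paper applies the lemma to $U^\dagger\rho U$ with $\rho$ possibly rank-deficient and reads \eqref{eq:QTypeProbZero} off from vanishing minors.

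Your wedge-product alternative is also sound, and it avoids the density step entirely. As stated, however, it computes $\bra{w}X^{\otimes n}\ket{w}$ for $\ket{w}=\bigotimes_j(e_1\wedge\cdots\wedge e_{\lambda'_j})$, which is a matrix coefficient of $X^{\otimes n}$, not of $q_\lambda(X)$. To connect the two, note that $\ket{w}$ has weight $\lambda$ and is annihilated by every raising operator $E_{ij}$ with $i<j$. These operators act blockwise as $\mathbf 1_{\mc P_\mu}\otimes dq_\mu(E_{ij})$, and the highest-weight vectors of $\mc Q_\mu$ have weight $\mu$. It follows that $U_{\mathrm{Schur}}\ket{w}=\ket{\lambda}\otimes\ket{v}\otimes\ket{\phi_\lambda}$ for a unit vector $\ket{v}\in\mc P_\lambda$, and hence $\bra{w}X^{\otimes n}\ket{w}=\bra{\phi_\lambda}q_\lambda(X)\ket{\phi_\lambda}$.
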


\subsection{Exact unitary designs}
\label{subsection:Prelims_NDesigns}

A useful property of classical empirical distributions is that the number of possible outcomes grows only polynomially fast in $n$. Unfortunately, standard approaches of estimating the eigenspace of an operator use Haar random unitaries for eigenvector estimation, leading to POVMs with uncountably many outcomes. This can be remedied by replacing Haar-continuous eigenspace estimation with a finite, exactly normalized measurement. The appropriate notion is that of an exact unitary design \footnote{Our definition of an exact unitary $t$-design differs from standard definitions, e.g. \cite{Roy2009Unitary}. Specifically, we require that the design can exactly integrate arbitrary polynomials with bounded degree in $U, U^\dagger$, while the standard definition only requires this to be fulfilled for balanced homogeneous polynomials (i.e. polynomials with matching degree in $U$ and $U^\dagger$).}.
\begin{definition}[Exact unitary design]
Let \(t\in\mathbb N\). A finite set
\[
    \mc V_t\subset \U(d)
\]
is called an exact unitary \(t\)-design if, for any polynomial,
\(f:\U(d)\to\mathbb C\), with degree at most \(t\) in the entries of \(U\) and degree at most \(t\) in \(U^\dagger\), 
\begin{equation}\label{equation:NDesign_PolyChar}
    \frac{1}{|\mc V_t|}
    \sum_{V\in\mc V_t} f(V)
    =
    \int_{\U(d)} f(U)\,d\mu_{\Haar}(U).
\end{equation}
Equivalently, for every \(0\le m,r\le t\) and every
\(A\in\mc B((\mathbb C^d)^{\otimes r}, (\mb C^d)^{\otimes m})\),
\begin{equation}\label{equation:NDesign_HermChar}
    \frac{1}{|\mc V_t|}
    \sum_{V\in\mc V_t}
    V^{\otimes m}A(V^\dagger)^{\otimes r}
    =
    \int_{\U(d)}
    U^{\otimes m}A(U^\dagger)^{\otimes r}
    \,d\mu_{\Haar}(U).
\end{equation}
\end{definition}

Combining \eqref{equation:NDesign_HermChar} with Lemma~\ref{lemma:Schur-orthogonality}, one immediately has: 
\medskip
\begin{lemma}[Design twirling on irreducible Schur blocks]
\label{lemma:design-twirling-Schur-block}
Let \(\mc V_n\subset\U(d)\) be an exact unitary \(n\)-design. Then, for
every \(\lambda\vdash_d n\) and every \(X\in\mc B(\mc Q_\lambda)\),
\begin{equation}\label{eq:design-twirl-qlambda}
    \frac{1}{|\mc V_n|}
    \sum_{V\in\mc V_n}
    q_\lambda(V)Xq_\lambda(V^\dagger)
    =
    \frac{\Tr X}{\dim\mc Q_\lambda}\,
    \mathbf 1_{\mc Q_\lambda}.
\end{equation}
\end{lemma}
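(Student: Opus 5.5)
The plan is to reduce the statement to the exact-design identity \eqref{equation:NDesign_HermChar} and then invoke Schur orthogonality (Lemma~\ref{lemma:Schur-orthogonality}). The key observation is that $q_\lambda$ is a polynomial representation of degree $n$ for $\lambda\vdash_d n$. Indeed, via the Schur transform, $q_\lambda(U)$ is the restriction of $U^{\otimes n}$ to a copy of $\mc Q_\lambda$ inside $(\mb C^d)^{\otimes n}$.

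Concretely, I would fix a unit vector $\ket{p}\in\mc P_\lambda$ and define the isometry
\[
J:\mc Q_\lambda\to(\mb C^d)^{\otimes n},\qquad J\ket{v} = U_{\mathrm{Schur}}^\dagger\bigl(\ket{\lambda}\otimes\ket{p}\otimes\ket{v}\bigr).
\]
By the block form of the Schur transform, $U^{\otimes n}J = J\,q_\lambda(U)$ for all $U\in\U(d)$. Since $J^\dagger J = \mathbf 1_{\mc Q_\lambda}$, it follows that $q_\lambda(U) = J^\dagger U^{\otimes n} J$. Using $(U^\dagger)^{\otimes n} = (U^{\otimes n})^\dagger$, we likewise get $q_\lambda(U^\dagger) = J^\dagger (U^\dagger)^{\otimes n} J$. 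Moreover $JJ^\dagger$ commutes with $U^{\otimes n}$, because $JJ^\dagger$ acts as the identity on the $\mc Q_\lambda$ factor. Hence, for $X\in\mc B(\mc Q_\lambda)$,
\[
q_\lambda(V)\,X\,q_\lambda(V^\dagger) = J^\dagger\, V^{\otimes n}\,(JXJ^\dagger)\,(V^\dagger)^{\otimes n}\, J .
\]
To justify this, write $JXJ^\dagger = JJ^\dagger (JXJ^\dagger) JJ^\dagger$ and move $JJ^\dagger$ through $V^{\otimes n}$. The identity therefore holds for every $V\in\U(d)$.

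Now set $A = JXJ^\dagger\in\mc B((\mb C^d)^{\otimes n})$ and apply \eqref{equation:NDesign_HermChar} with $m=r=n\le t=n$. This replaces the average over $\mc V_n$ by the Haar integral, and conjugation by the fixed maps $J^\dagger$ and $J$ commutes with both operations by linearity. Reversing the identity above under the Haar integral gives
\[
\frac{1}{|\mc V_n|}\sum_{V\in\mc V_n} q_\lambda(V)Xq_\lambda(V^\dagger) = \int_{\U(d)} q_\lambda(U)\,X\,q_\lambda(U)^\dagger\,d\mu_{\Haar}(U).
\]
Lemma~\ref{lemma:Schur-orthogonality} then evaluates the right-hand side as $\frac{\Tr X}{\dim\mc Q_\lambda}\mathbf 1_{\mc Q_\lambda}$, which is the claim.

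The only step needing care is the embedding step, namely checking $q_\lambda(U^\dagger)=q_\lambda(U)^\dagger$ and that $JJ^\dagger$ commutes with $V^{\otimes n}$. Both follow directly from the block decomposition $U_{\mathrm{Schur}}U^{\otimes n}U_{\mathrm{Schur}}^\dagger=\sum_\lambda|\lambda\rangle\langle\lambda|\otimes\mathbf 1\otimes q_\lambda(U)$ and unitarity of $U$. Nothing else is expected to pose an obstacle.
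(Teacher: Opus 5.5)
Your proof is correct and is essentially the paper's argument: the paper simply asserts that the lemma follows immediately by combining the design identity \eqref{equation:NDesign_HermChar} (with $m=r=n$) with Schur orthogonality (Lemma~\ref{lemma:Schur-orthogonality}), and your isometry $J$ just makes explicit why $q_\lambda$ is a compression of $V^{\otimes n}$, so that the design identity applies. As a minor point, you do not need the commutation of $JJ^\dagger$ with $V^{\otimes n}$, since $q_\lambda(V)Xq_\lambda(V^\dagger)=J^\dagger V^{\otimes n}(JXJ^\dagger)(V^\dagger)^{\otimes n}J$ already follows by substituting $q_\lambda(V)=J^\dagger V^{\otimes n}J$ and $q_\lambda(V^\dagger)=J^\dagger (V^\dagger)^{\otimes n}J$.
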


We need the following two facts about the asymptotic nature of exact designs. First, any sequence of exact designs of increasing order becomes dense in
\(\U(d)\).
\medskip
\begin{lemma}[Exact designs become dense]
\label{lemma:DesignsAreDense}
Let \(\{\mc V_n\}_{n\ge1}\) be a sequence such that \(\mc V_n\) is an
exact unitary \(n\)-design. Then
\[
   \lim_{n\to \infty} \sup_{U\in\U(d)}
    \inf_{V\in\mc V_n}
    \|U-V\|_1 = 0.
\]
\end{lemma}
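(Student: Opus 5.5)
The plan is a contradiction argument: combine compactness of $\U(d)$, the Stone--Weierstrass theorem, and the fact that Haar measure gives positive mass to every nonempty open set. Suppose the claim fails. Then there are $\varepsilon>0$, a subsequence $n_j\to\infty$, and unitaries $U_{n_j}\in\U(d)$ with $\|U_{n_j}-V\|_1\ge\varepsilon$ for all $V\in\mc V_{n_j}$. Since $\U(d)$ is compact, we may pass to a further subsequence with $U_{n_j}\to U_*$. Then for all large $j$ the open ball $B:=\{U:\|U-U_*\|_1<\varepsilon/2\}$ contains no element of $\mc V_{n_j}$.

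\textbf{Step 1 (bump function).} Take a continuous $f:\U(d)\to[0,1]$, for instance
\[
f(U)=\max\bigl(0,\,1-4\|U-U_*\|_1/\varepsilon\bigr),
\]
which is supported in $B$ and equals $1$ at $U_*$. Because Haar measure is positive on nonempty open sets, $c:=\int f\,d\mu_{\Haar}>0$. Since $f$ vanishes on $\mc V_{n_j}$ for large $j$, we have $\frac{1}{|\mc V_{n_j}|}\sum_{V\in\mc V_{n_j}}f(V)=0$.

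\textbf{Step 2 (polynomial approximation).} Let $\mc A$ be the algebra of polynomials in the entries of $U$ and $U^\dagger$, restricted to $\U(d)$. It contains the constants and separates points, since the entries of $U$ already do. It is closed under complex conjugation, because $\overline{U_{ij}}=(U^\dagger)_{ji}$. Stone--Weierstrass therefore gives $p\in\mc A$ with $\sup_{\U(d)}|f-p|<c/3$. This $p$ has some finite degree $t$ in $U$ and in $U^\dagger$.

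This is the step where the paper's non-standard definition of an exact design matters, and it is the main point to check. The approximating polynomial is in general unbalanced, with different degrees in $U$ and in $U^\dagger$, so balanced $t$-designs would not suffice. Under the definition \eqref{equation:NDesign_PolyChar}, however, an exact $n$-design integrates exactly every polynomial of degree at most $n$ in $U$ and at most $n$ in $U^\dagger$. Hence it integrates $p$ exactly whenever $n\ge t$.

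\textbf{Step 3 (contradiction).} For $j$ large enough that $n_j\ge t$ and Step 1 applies, exactness gives
\[
\Bigl|\int p\,d\mu_{\Haar}\Bigr|=\Bigl|\frac{1}{|\mc V_{n_j}|}\sum_{V\in\mc V_{n_j}}p(V)\Bigr|\le \frac{1}{|\mc V_{n_j}|}\sum_{V\in\mc V_{n_j}}\bigl(|f(V)|+c/3\bigr)=c/3.
\]
On the other hand, $\bigl|\int p\,d\mu_{\Haar}\bigr|\ge \int f\,d\mu_{\Haar}-c/3=2c/3$. These two bounds contradict each other, so the supremum in the statement must tend to $0$.

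Finally, the statement is phrased with the trace norm. All norms on $M_d(\mb C)$ are equivalent, so the choice of norm affects neither the compactness step nor the continuity of $f$.
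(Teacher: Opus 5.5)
Your proof is correct and follows the route the paper indicates: exactness of the design on polynomials in $U$ and $U^\dagger$, combined with Stone--Weierstrass. You supply the compactness and bump-function details that the paper's one-line proof leaves implicit. You also rightly observe that the paper's unbalanced-degree definition of an exact design is what lets the design integrate the approximating polynomial exactly.
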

\begin{proof}
    The proof follows from \eqref{equation:NDesign_PolyChar}, and an elementary application of the Stone-Weierstrass theorem. 
\end{proof}


Finally, the designs may be chosen with polynomial cardinality, see \cite{etayo2018asymptotically}:
\medskip

\begin{lemma}[Polynomial exact unitary designs]
    \label{lemma:polySizeUnitaryDesigns}
    For fixed \(d\), there exist exact unitary \(n\)-designs
    \(\mc V_n\subset\U(d)\) and constants \(C_d,c_d>0\) such that
    \begin{equation}\label{eq:design-polynomial-size}
        |\mc V_n|\le C_d n^{c_d}.
    \end{equation}
    In particular,
    \begin{equation}\label{eq:design-subexp-size}
        \lim_{n\to\infty}
        \frac1n\log |\mc V_n|
        =
        0.
    \end{equation}
\end{lemma}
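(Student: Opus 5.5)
The statement asks for exact unitary $n$-designs in $\U(d)$ of size polynomial in $n$, for fixed $d$; the limit \eqref{eq:design-subexp-size} then follows at once, since $\frac1n\log|\mc V_n|\le \frac1n(\log C_d+c_d\log n)\to 0$. So the whole task is the existence claim \eqref{eq:design-polynomial-size}. I will reduce it to a quadrature statement on a compact algebraic manifold: by \eqref{equation:NDesign_PolyChar}, $\mc V_n$ is an exact design exactly when its normalized counting measure integrates every polynomial in the entries of $U,\bar U$ of bidegree at most $(n,n)$ exactly. These polynomials, restricted to $\U(d)$, form a finite-dimensional space $\mc P_n$.

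\emph{Step 1 (dimension count).} The space $\mc P_n$ is spanned by monomials in the $2d^2$ real coordinates of degree at most $2n$, so
\[
\dim \mc P_n \le \binom{2n+2d^2}{2d^2} = O\!\left(n^{2d^2}\right).
\]
A sharper count uses the decomposition into matrix coefficients of the irreducibles occurring in $U^{\otimes m}\otimes \bar U^{\otimes r}$. That refinement is not needed here, since any polynomial bound suffices.

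\emph{Step 2 (Carathéodory/Tchakaloff).} Let $N=\dim\mc P_n$, and consider the moment map $U\mapsto (f_1(U),\dots,f_N(U))\in\mb R^{2N}$ for a real basis $f_i$ of $\mc P_n$. The Haar moment vector lies in the convex hull of the image of the compact set $\U(d)$. By Carathéodory's theorem it is therefore a convex combination of at most $2N+1$ points, which gives a \emph{weighted} exact quadrature of polynomial size.

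\emph{Step 3 (removing the weights).} This is the main obstacle, because the definition requires equal weights. I would handle it in one of two ways.
\begin{itemize}
\item[(a)] Invoke the theorem of Etayo--Marzo--Ortega-Cerdà (\cite{etayo2018asymptotically}, following Bondarenko--Radchenko--Viazovska). It states that on a compact connected algebraic manifold of dimension $D$, equal-weight exact designs for polynomials of degree at most $t$ exist with $O(t^{D})$ points. Applied to $\U(d)$ with $D=d^2$ and $t=2n$, this gives $|\mc V_n|\le C_d n^{d^2}$.
\item[(b)] If I wanted a self-contained route, I would reproduce the topological degree argument. Use a partition of $\U(d)$ into $M$ cells of equal Haar measure and small diameter, together with a Brouwer-type argument on the map from point configurations to moment errors. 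This needs a Marcinkiewicz--Zygmund inequality for $\mc P_n$ on $\U(d)$, obtainable from Bernstein's inequality for polynomials on the group.
\end{itemize}
I would rely on (a), citing it directly, since the paper already points to that reference. I would also check that its notion of polynomial degree matches the unbalanced definition used here, which it does because all bidegrees up to $(n,n)$ are polynomials of total degree at most $2n$.
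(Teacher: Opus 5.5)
Your proposal is correct and matches the paper, which likewise proves the lemma simply by citing Etayo--Marzo--Ortega-Cerd\`a \cite{etayo2018asymptotically}. Your Steps 1--2 are not needed; besides the degree check you flag, the only thing to verify is the standard fact that the volume measure induced on $\U(d)\subset\mathbb{R}^{2d^2}$ by the Frobenius metric is bi-invariant, and hence equals normalized Haar measure.

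Note that this citation gives a non-explicit $C_d$ (with the better exponent $c_d=d^2$), whereas the paper also invokes Kane \cite{kane2015small} to obtain the explicit constants in \eqref{eq:UDesigns_ExplicitConstants} that appear in the bounds of Theorem~\ref{theorem:QuantMethodOfTypes}.
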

Notably, applying results in \cite{kane2015small}, one can recover the explicit (though suboptimal) constants 
\begin{equation}
    \label{eq:UDesigns_ExplicitConstants}
    C_d = 9d^8, \qquad c_d=4d^2
\end{equation}

\section{Quantum Types}

The goal of this section is to prove the following theorem.
\begin{theorem}[Quantum Types]
\label{theorem:QuantMethodOfTypes}
    There exists a sequence $\{(\hat{\Sigma}_n, \mc M_n)\}_{n \geq 1}$ where for each $n$, $\hat \Sigma_n \subset \mc D_d$ and $\mc M_n = \{M_\sigma^{(n)}\}_{\sigma \in \hat \Sigma_n} \subset \mc L_+(\mc H_n)$ is a POVM, such that:
    \begin{enumerate}
        \item (Density): As $n \to \infty$, $\hat \Sigma_n$ becomes dense in $\mc D_d$:
        \begin{equation}
            \label{eq:empOpsDense}
            \lim_{n \to \infty} \sup_{\tau \in \mc D_d} \inf_{\sigma \in \hat \Sigma_n} \|\tau - \sigma\|_1 = 0.
        \end{equation}
        \item (Exponential Rate): For any $\rho \in \mc D_d$, and $\sigma \in \hat \Sigma_n$ such that $D_R(\sigma\|\rho) < \infty$,
        {
        \begin{equation}
            \label{eq:QTypeProbBound}
            \begin{split}
                \left[9d^8 (n+d)^{\frac{9d^2 + d}{2}}\right]^{-1} e^{-n D_R(\sigma\|\rho)} 
                \leq \Tr[M_\sigma^{(n)} \rho^{\otimes n}] \\
                \leq (n+1)^{d^2} e^{-n D_R(\sigma\|\rho)}
            \end{split}
        \end{equation}
        }
        Additionally, if $D_R(\sigma\|\rho) = \infty$, then 
        \begin{equation}
            \label{eq:QTypeProbZero}
            \Tr[M_\sigma^{(n)} \rho^{\otimes n}] = 0.
        \end{equation}
        \item (Polynomial size): $\forall n \in \mathbb{N}$, 
        \begin{equation}
            \label{eq:QTypes_polySize}
            |\hat \Sigma_n| \leq 9d^8(n+1)^{5d^2}
        \end{equation}
    \end{enumerate}
\end{theorem}


To establish Theorem \ref{theorem:QuantMethodOfTypes}, we build a POVM with outcomes that correspond to possible empirical operators. Let $\mc V_n$ be an exact unitary $n$-design with $|\mc V_n| \leq 9d^8 n^{4d^2}$, and define $\hat \Gamma_n = \{(\lambda, U): \lambda \in \Lambda_{n,d}, U \in \mc V_n\}$. Our POVM will be $\{M^{(n)}_{\lambda, U}\}_{(\lambda, U) \in \hat \Gamma_n}$, with 
{\footnotesize
\begin{align*}
    M^{(n)}_{\lambda, U} 
    = \frac{\dim \mc Q_\lambda}{|\mc V_n|} U_{\rm{Schur}}^\dagger \left(\ket{\lambda}\bra{\lambda} \otimes \textbf{1}_{\mc P_\lambda} \otimes q_\lambda(U)  \ket{\phi_\lambda}\bra{\phi_\lambda} q_\lambda(U^\dagger) \right)U_{\rm{Schur}}
\end{align*}
}
It is not hard to verify that $\{M^{(n)}_{\lambda, U}\}_{(\lambda, U) \in \hat \Gamma_n}$ is a valid POVM. 
{\small
\begin{align*}
    &U_{\rm{Schur}} \left(\sum_{(\lambda, U) \in \hat \Gamma_n} M^{(n)}_{\lambda, U} \right) U_{\rm{Schur}}^\dagger \\
    &= \sum_{\lambda \vdash_d n} \frac{\dim \mc Q_\lambda}{|\mc V_n|} \sum_{U \in \mc V_n} \ket{\lambda}\bra{\lambda} \otimes \textbf{1}_{\mc P_\lambda} \otimes q_\lambda(U)  \ket{\phi_\lambda}\bra{\phi_\lambda} q_\lambda(U^\dagger)\\
    &= \sum_{\lambda \vdash_d n} \ket{\lambda}\bra{\lambda} \otimes \textbf{1}_{\mc P_\lambda} \otimes \left[\frac{\dim \mc Q_\lambda}{|\mc V_n|}\sum_{U \in \mc V_n} q_\lambda(U)  \ket{\phi_\lambda}\bra{\phi_\lambda} q_\lambda(U^\dagger) \right]\\
    &\overset{(a)}{=} \sum_{\lambda \vdash_d n} \ket{\lambda}\bra{\lambda} \otimes \textbf{1}_{\mc P_\lambda} \otimes \textbf{1}_{\mc Q_\lambda}\\
    &= \textbf{1}_{\mc H_n},
\end{align*}
}
where $(a)$ uses Lemma \ref{lemma:design-twirling-Schur-block}.

Now, we can evaluate the probability of an outcome when the POVM is applied to $\rho^{\otimes n}$. For $(\lambda, U) \in \hat \Gamma_n$,
\begin{align*}
    &\Tr[M_{\lambda, U} \rho^{\otimes n}] \\
    &= \Tr\left[U_{\rm{Schur}} M_{\lambda, U}^{(n)} \rho^{\otimes n} U_{\rm{Schur}}^\dagger\right]\\
    &= \frac{\dim \mc Q_\lambda}{|\mc V_n|} \Tr\left[\ket{\lambda}\bra{\lambda} \otimes \textbf{1}_{\mc P_\lambda} \otimes q_\lambda(U) \ket{\phi_\lambda}\bra{\phi_\lambda} q_\lambda(U^\dagger) q_\lambda(\rho)\right]\\
    &= \frac{\dim \mc P_\lambda \dim \mc Q_\lambda}{|\mc V_n|} \prod_{k=1}^d \Delta_k(U^\dagger \rho U)^{\lambda_k - \lambda_{k+1}}.
\end{align*}
The last equality follows from Lemma \ref{lem:HighestWeightProj}. 

We can bound this probability on both sides using Lemma \ref{lemma:rep-data} and Lemma \ref{lemma:polySizeUnitaryDesigns}, along with \eqref{eq:reverseRelativeEntropy_Def},
\begin{align}
    \label{eq:BoundOnMeasuredProb}
    \frac{1}{9d^8 (n+d)^{\frac{9d^2+d}{2}} } e^{-n D_R(U \diag(\bar \lambda) U^\dagger\|\rho)} \leq \Tr[M_{\lambda, U}^{(n)} \rho^{\otimes n}]\\ \leq \frac{1}{|\mc V_n|}(n+1)^{d^2}  e^{-n D_R(U \diag(\bar \lambda) U^\dagger\|\rho)}
\end{align}

This is already close to our desired result; for a pair $(\lambda, U) \in \hat \Gamma_n$, taking $\hat \sigma = U \diag (\bar \lambda) U^\dagger$ yields a valid density operator. 

However, $\sigma$ is not always uniquely defined by a $(\lambda, U)$ pair, because if $\bar \lambda$ has degeneracies, multiple unitaries may induce the same empirical density operator. To remedy this, let \(\hat \Sigma_n = \{\sigma = U \diag(\bar \lambda) U^\dagger \in \mc D_d: (\lambda, U) \in \hat \Gamma_n\}\), and for any $\sigma \in \hat \Sigma_n$ take \(R(\sigma) = \{(\lambda, U) \in \hat \Gamma_n: U \diag(\bar \lambda) U^\dagger = \sigma\}\). 
Then for any $\sigma \in \hat \Sigma_n$, we define 
\begin{equation}
    M_\sigma^{(n)} := \sum_{(\lambda, U) \in R(\sigma)} M_{\lambda, U}^{(n)}.
\end{equation}
Thus, the POVM for our quantum method of types will be $\mc M_n := \{M_\sigma^{(n)}\}_{\sigma \in \hat \Sigma_n}$, satisfying the claims in Theorem \ref{theorem:QuantMethodOfTypes}.
\begin{proof}
    Let $\hat \Sigma_n$, and $\mc M_n$ be defined as above. 
    \begin{enumerate}

        \item \eqref{eq:empOpsDense} is a corollary of Lemma \ref{lemma:DesignsAreDense} and Lemma \ref{lem:TypesAreDense}.
        \ifIncProofs
        More explicitly, for any $\tau = V D V^\dagger \in \mc D_d$ taking the minimizer $\sigma = U \bar \lambda U^\dagger \in \hat \Sigma_n$,  
        \begin{equation*}
            \|\sigma - \tau\|_1 \leq \|U \diag(\bar \lambda) U^\dagger - U D U^\dagger\|_1 + \|U D U^\dagger - V D V^\dagger\|_1 \to 0
        \end{equation*}
        \fi

        \item The lower bound in \eqref{eq:QTypeProbBound} follows from \eqref{eq:BoundOnMeasuredProb}, while the upper bound uses $\frac{|R(\sigma)|}{|\mc V_n|} \leq 1$. On the other hand, \eqref{eq:QTypeProbZero} can be seen directly from Lemma \ref{lem:HighestWeightProj}.

        \item \eqref{eq:QTypes_polySize} uses the fact that $|\mc M_n| \leq |T_{n,d}| |\mc V_n|$, along with \eqref{eq:UDesigns_ExplicitConstants}, and $|T_{n,d}| \leq (n+1)^d$. 
    \end{enumerate}
\end{proof}


\section{Composite Quantum Hypothesis Testing}

We define a state estimation scheme, so that we can use this quantum method of types for composite quantum hypothesis testing. 
\begin{definition}
    \label{def:state-estimation}
    Given a quantum method of types described by $\hat \Sigma_n$ and $\mc M_n$, $E_n$ is the corresponding state estimation scheme (tomography protocol), so that for any $A \subseteq \mc D_d$, we define $E_n(A) := \sum_{\sigma \in  (\hat \Sigma_n \cap A)} M_{\sigma}^{(n)}$, with the understanding that for $A \cap \hat \Sigma_n = \emptyset$, $E_n(A) \equiv 0$.
    \smallskip

    Additionally, for any $\rho \in \mc D_d$ and $A \subseteq \mc D_d$, take
    \begin{equation}
        \label{eq:stateEstProb}
        \mathbb{P}_{\mc M_n}[A| \rho] := \Tr[E_n(A) \rho^{\otimes n}]
    \end{equation}
\end{definition}

We can now compute a large-deviation bound for our empirical density operators.  
\begin{corollary}[Large-deviation bound]
    \label{cor:largeDevBound}
    Take $\hat \Sigma_n$ and $\mc M_n$ as in Theorem \ref{theorem:QuantMethodOfTypes}, and $E_n$ as in Definition \ref{def:state-estimation}, then for any $A \subseteq \mc D_d$, we have
    { \small 
    \begin{align*}
        \left[(n+d)^{\frac{9d^2+d}{2}}9d^8\right]^{-1} \exp\left\{-n \inf_{\sigma \in \hat \Sigma_n \cap A} D_R(\sigma\|\rho)\right\} \leq \mathbb{P}_{\mc M_n}[A| \rho] \\
        \leq (n+1)^{6d^2}  9d^8 \exp\left\{-n \inf_{\sigma \in \hat \Sigma_n \cap A}D_R(\sigma\|\rho)\right\},
    \end{align*}
    }
    with the  convention that $\inf \{\emptyset\} = \infty$.
\end{corollary}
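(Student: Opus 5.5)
The plan is to write $\mathbb{P}_{\mc M_n}[A|\rho]$ as a finite sum over empirical operators,
\[
\mathbb{P}_{\mc M_n}[A|\rho] = \sum_{\sigma \in \hat\Sigma_n \cap A} \Tr[M^{(n)}_\sigma \rho^{\otimes n}],
\]
which is immediate from Definition~\ref{def:state-estimation}. Then I will sandwich this sum between its largest term and $|\hat\Sigma_n|$ times its largest term, exactly as in the classical method of types. The case $\hat\Sigma_n \cap A = \emptyset$ is trivial: both sides vanish, since $E_n(A)=0$ and $e^{-\infty}=0$ under the stated convention. The case where every $\sigma \in \hat\Sigma_n\cap A$ has $D_R(\sigma\|\rho)=\infty$ is also trivial, because by \eqref{eq:QTypeProbZero} every term vanishes and both bounds equal $0$. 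So I will assume the infimum $I_n := \inf_{\sigma\in\hat\Sigma_n\cap A} D_R(\sigma\|\rho)$ is finite. Since $\hat\Sigma_n$ is finite, the infimum is attained at some $\sigma^\ast$.

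\textbf{Lower bound.} Dropping all terms except $\sigma^\ast$ (all terms are nonnegative since the $M_\sigma^{(n)}$ are positive), I apply the lower bound in \eqref{eq:QTypeProbBound} to $\sigma^\ast$. This gives
\[
\mathbb{P}_{\mc M_n}[A|\rho] \ge \bigl[9d^8 (n+d)^{(9d^2+d)/2}\bigr]^{-1} e^{-n I_n}.
\]

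\textbf{Upper bound.} Terms with $D_R=\infty$ vanish by \eqref{eq:QTypeProbZero}. Each remaining term is at most $(n+1)^{d^2} e^{-nD_R(\sigma\|\rho)} \le (n+1)^{d^2} e^{-nI_n}$ by the upper bound in \eqref{eq:QTypeProbBound}. Summing and using $|\hat\Sigma_n\cap A| \le |\hat\Sigma_n| \le 9d^8 (n+1)^{5d^2}$ from \eqref{eq:QTypes_polySize} gives the prefactor $9d^8(n+1)^{6d^2}$.

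There is no real obstacle here; the only points requiring care are the bookkeeping of infinite $D_R$ values and the empty-set convention. Everything else is a direct consequence of Theorem~\ref{theorem:QuantMethodOfTypes}.
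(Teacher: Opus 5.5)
Your proposal is correct and matches the argument the paper leaves implicit. You write $\mathbb{P}_{\mc M_n}[A|\rho]$ as a finite sum over $\hat\Sigma_n\cap A$, bound it below by the dominant term and above by $|\hat\Sigma_n|$ times the dominant term via \eqref{eq:QTypeProbBound}, \eqref{eq:QTypeProbZero} and \eqref{eq:QTypes_polySize}; this is exactly where the stated prefactor $9d^8(n+1)^{6d^2} = (n+1)^{d^2}\cdot 9d^8(n+1)^{5d^2}$ comes from, and your handling of the empty set and of infinite $D_R$ values is also correct.
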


From this we can prove the following version of Sanov's theorem. 
\begin{theorem}[Yet Another Quantum Sanov Theorem]
    \label{thm:yetAnotherQSanov}
    Take $\{\hat \Sigma_n, \mc M_n\}$ as in Theorem \ref{theorem:QuantMethodOfTypes}, and $E_n$ as in Definition \ref{def:state-estimation}. Then for any $A \subset \mc D_d$,
    \begin{align}
        \label{eq:qSanovResult}
        -\inf_{\sigma \in \rm{int}(A)} D_R(\sigma\|\rho) \leq \liminf_{n \to \infty} \frac{1}{n} \log \mathbb{P}_{\mc M_n}[A|\rho] \\
        \leq \limsup_{n \to \infty} \frac{1}{n} \log \mathbb{P}_{\mc M_n}[A|\rho] \leq -\inf_{\sigma \in A}  D_R(\sigma\|\rho).
    \end{align}
\end{theorem}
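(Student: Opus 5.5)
The plan is to derive both bounds from Corollary~\ref{cor:largeDevBound}. Take $\frac1n\log$ of both sides and send $n\to\infty$. The polynomial prefactors $(n+d)^{(9d^2+d)/2}9d^8$ and $(n+1)^{6d^2}9d^8$ contribute $O(\log n/n)\to 0$. What remains is to compare $\inf_{\sigma\in\hat\Sigma_n\cap A}D_R(\sigma\|\rho)$ with the infima over $A$ and over $\mathrm{int}(A)$.

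\textbf{Upper bound.} Since $\hat\Sigma_n\cap A\subseteq A$, we have $\inf_{\sigma\in\hat\Sigma_n\cap A}D_R(\sigma\|\rho)\ge\inf_{\sigma\in A}D_R(\sigma\|\rho)$. The upper bound of Corollary~\ref{cor:largeDevBound} then gives
\[
\frac1n\log\mathbb{P}_{\mc M_n}[A|\rho]\le \frac{6d^2\log(n+1)+\log(9d^8)}{n}-\inf_{\sigma\in A}D_R(\sigma\|\rho).
\]
Taking $\limsup$ finishes this direction. If the infimum is $\infty$, the probability is identically $0$ and the inequality holds trivially.

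\textbf{Lower bound.} If $\inf_{\mathrm{int}(A)}D_R=\infty$ there is nothing to prove. Otherwise, fix $\epsilon>0$ and pick $\tau\in\mathrm{int}(A)$ with $D_R(\tau\|\rho)<\inf_{\mathrm{int}(A)}D_R+\epsilon<\infty$. The goal is to produce $\sigma_n\in\hat\Sigma_n\cap A$ with $\sigma_n\to\tau$ and $\limsup_n D_R(\sigma_n\|\rho)\le D_R(\tau\|\rho)$. Then the lower bound in Corollary~\ref{cor:largeDevBound} gives
\[
\liminf_n\frac1n\log\mathbb{P}_{\mc M_n}[A|\rho]\ge -D_R(\tau\|\rho)\ge -\inf_{\mathrm{int}(A)}D_R-\epsilon,
\]
and we let $\epsilon\to0$. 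Existence of some $\sigma_n\to\tau$ in $\hat\Sigma_n$ follows from the density statement \eqref{eq:empOpsDense}. Because $\tau$ is interior, $\sigma_n\in A$ for all large $n$.

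\textbf{Main obstacle: continuity.} Remark~\ref{remark:PropsOfDR}(\ref{item:DR_ContWhenFinite}) gives continuity only on the finiteness domain of $D_R(\cdot\|\rho)$, so we also need $D_R(\sigma_n\|\rho)<\infty$ for the approximants. If $\rho$ is full rank, $D_R(\cdot\|\rho)$ is finite everywhere: all $\Delta_k(U^\dagger\rho U)>0$. Continuity then applies directly.

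For singular $\rho$, I would first perturb $\tau$ inside $\mathrm{int}(A)$ to a full-rank $\tau'$ whose eigenvectors are "generic" relative to $\rho$, with $D_R(\tau'\|\rho)$ still close to $D_R(\tau\|\rho)$. I would try to build $\tau'$ by mixing $\tau$ slightly toward a state that keeps $\Delta_k(U^\dagger\rho U)>0$ for the relevant $k$, or alternatively restrict attention to the support of $\rho$.

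Next I would approximate $\tau'=VDV^\dagger$ by $\sigma_n=U_n\diag(\bar\lambda_n)U_n^\dagger$, choosing $U_n\in\mc V_n$ close to $V$ by Lemma~\ref{lemma:DesignsAreDense} and $\bar\lambda_n$ close to the spectrum $D$ by Lemma~\ref{lem:TypesAreDense}. Finiteness of $D_R$ is determined by nonvanishing of the minors $\Delta_k(U_n^\dagger\rho U_n)$ for $k\le\mathrm{rank}$, which is an open condition in $U_n$. The one further requirement is that $\mathrm{rank}(\bar\lambda_n)$ not exceed the admissible rank; this can be ensured by choosing types with the same zero pattern as the spectrum of $\tau'$. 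Under these conditions $\sigma_n$ stays in the finite domain, and \eqref{eq:reverseRelativeEntropy_Def}, being continuous in $(U,x)$ there, gives $D_R(\sigma_n\|\rho)\to D_R(\tau'\|\rho)$. Handling these degenerate and rank-deficient cases carefully is the delicate step.
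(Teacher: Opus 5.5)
Your upper bound and the structure of your lower bound (density of $\hat\Sigma_n$, interiority of $\tau$, continuity of $D_R(\cdot\|\rho)$ on its finiteness domain) match the paper's. You also correctly isolate the real issue: the paper just asserts that the approximants eventually satisfy $\rank(\sigma_n)=\rank(\sigma_n\rho)$, and this does not follow from density alone. However, your proposed first fix for singular $\rho$ fails. Finiteness of $D_R(\tau'\|\rho)$ requires $\mathrm{range}(\tau')\cap\ker\rho=\{0\}$, hence $\rank(\tau')\le\rank(\rho)$. So when $\rho$ is singular, every full-rank $\tau'$ has $D_R(\tau'\|\rho)=\infty$, and mixing $\tau$ toward a generic state raises its rank into the same problem. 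Restricting to $\supp\rho$ is also too strong: $\rho=\ket{0}\bra{0}$ and $\tau=\ket{+}\bra{+}$ give $D_R(\tau\|\rho)=\log 2$, yet $\tau$ is not supported on $\supp\rho$.

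No perturbation is needed; apply your last paragraph directly to $\tau=V\diag(x)V^\dagger$. Let $r=\rank(\tau)$. Finiteness means $\Delta_r(V^\dagger\rho V)>0$, so the leading $r\times r$ block of $V^\dagger\rho V$ is positive definite and $\Delta_k(V^\dagger\rho V)>0$ for every $k\le r$. Choose $U_n\in\mc V_n$ with $U_n\to V$ (Lemma~\ref{lemma:DesignsAreDense}). Choose $\lambda_n\vdash_d n$ with exactly $r$ nonzero rows and $\bar\lambda_n\to x$; this is possible because $x_r>0$, by sorting a nearest type supported on the first $r$ coordinates. Then $\sigma_n=U_n\diag(\bar\lambda_n)U_n^\dagger\to\tau$ and $\rank(\sigma_n)=r$. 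Moreover $\Delta_k(U_n^\dagger\rho U_n)>0$ for all $k\le r$ once $n$ is large, so $\sigma_n$ lies in the finiteness domain. Hence $D_R(\sigma_n\|\rho)\to D_R(\tau\|\rho)$ by Remark~\ref{remark:PropsOfDR}(\ref{item:DR_ContWhenFinite}). You can also see this directly from \eqref{eq:reverseRelativeEntropy_Def}: if $\bar\lambda_n$ splits a degenerate eigenvalue of $x$, the extra terms have coefficients tending to $0$ and bounded logarithms. With the full-rank detour deleted, your proof is complete, and it supplies the step the paper leaves unjustified.
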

Here $\rm{int}(A)$ denotes the interior of $A$, and the right hand side does not require the usual closure of $A$, because there are only a finite number of empirical operators for any $n$. 
\begin{proof}
    \ifnotIncProofs
    A proof of this statement follows the same technique used to prove  classical Sanov's theorem for discrete probability distributions, see for example \cite{Dembo2009LargeDeviations}. Both the upper and lower bound come almost immediately from Corollary \ref{cor:largeDevBound}, with the lower bound additionally requiring the continuity property of $D_R$ in Remark \ref{remark:PropsOfDR}.

    \ifIncProofs
    Our proof exactly follows the technique in proving \cite[Theorem 2.1.10]{Dembo2009LargeDeviations}.

    From Corollary \ref{cor:largeDevBound}, we have 
    \[
        -\inf_{\sigma \in \hat \Sigma_n \cap A} D_R(\sigma\|\rho) + o(1) \leq \frac{1}{n} \log \mathbb{P}_{\mc M_n}[A|\rho] \leq -\inf_{\sigma \in \hat \Sigma_n \cap A} D_R(\sigma\|\rho) + o(1),
    \]
    so that 
    \begin{align}
        \limsup_{n \to \infty} \frac{1}{n} \log \mathbb{P}_{\mc M_n}[A|\rho] &= -\liminf_{n \to \infty}\{\inf_{\sigma \in \hat \Sigma_n \cap A} D_R(\sigma\|\rho)\}      \label{align:limsupEqualsNegLiminf}\\
        \liminf_{n \to \infty} \frac{1}{n} \log \mathbb{P}_{\mc M_n}[A|\rho] &= -\limsup_{n \to \infty}\{\inf_{\sigma \in \hat \Sigma_n \cap A} D_R(\sigma\|\rho)\}      \label{align:liminfEqualsNegLimsup}
    \end{align}

    From \eqref{align:limsupEqualsNegLiminf}, and 
    \[
        -\liminf_{n \to \infty} \{\inf_{\sigma \in \hat \Sigma_n \cap A} D_R(\sigma\|\rho)\} \leq \inf_{\sigma \in A} D_R(\sigma\|\rho),
    \]
    we get the upper bound. 

    Now we prove the lower bound. Without loss of generality, assume $\exists \sigma \in \rm{int}(A)$ with $D_R(\sigma\|\rho) < \infty$, as the result holds trivially otherwise. Since $\hat \Sigma_n$ becomes dense in $\mc D_d$, $\exists$ a sequence $\{\sigma_n\}$ so that 
    \[
        \sigma_n \in \hat \Sigma_n, \quad \forall n,
    \]
    $\sigma_n \to \sigma$, and eventually $\sigma_n \in \hat \Sigma_n \cap A$ along with $\rank(\sigma_n) = \rank(\sigma_n \rho)$. 

    Then 
    \begin{align*}
        -\limsup_{n \to \infty} \{\inf_{\tau \in \hat \Sigma_n \cap A} D_R(\tau\|\rho)\} &\geq -\limsup_{n \to \infty} D_R(\sigma\|\rho)\\
        &= D_R(\sigma\|\rho),
    \end{align*}
    where the last equality uses the continuity property of $D_R$, from Remark \ref{remark:PropsOfDR}. 
    
    Therefore, 
    \[
        -\liminf_{n \to \infty} \frac{1}{n} \log \mathbb{P}_{\mc M_n}[A|\rho] = -\limsup_{n \to \infty} \{\inf_{\sigma \in \hat \Sigma_n \cap A} D_R(\sigma\|\rho) \} \geq -\inf_{\sigma \in \rm{int}(A)} D_R(\sigma\|\rho)
    \]
    \fi
\end{proof}

The large-deviation result of Corollary \ref{cor:largeDevBound} immediately implies that $D_R$ is an achievable error exponent for general composite quantum hypothesis testing problems. For $A,B \subset \mc D_d$, the hypothesis testing problem attempts to distinguish between 
\begin{equation}
    \label{eq:HypTestSetup}
    \begin{split}
        H_0&: \omega \in A\\
        H_1&: \omega \in B,
    \end{split}
\end{equation}
given $\omega^{\otimes n}$. A test for this problem is specified by an effect, $0 \leq L_n \leq \textbf{1}$. There are two types of errors for this test, referred to as the type I and type II error  
\begin{align*}
    \alpha_n(L_n) &= \sup_{\sigma \in A} \Tr[(1-L_n) \sigma^{\otimes n}]\\
    \beta_n(L_n) &= \sup_{\rho \in B} \Tr[L_n \rho^{\otimes n}].
\end{align*}

\begin{theorem}[Composite quantum hypothesis testing achievability]
    \label{thm:CompositeQuantumHypTestBound}
    Let $A, B \subset \mc D_d$ be closed sets.

    Using the quantum method of types, we can construct an effect $L_n$ for the hypothesis testing problem
    \[
        H_0: \omega \in A, \qquad
        H_1: \omega \in B
    \]
    such that 
    \begin{equation}
        \alpha_n(L_n) \to 0,
    \end{equation}
    and 
    \begin{align}
        \liminf_{n \to \infty}-\frac{1}{n}\log \beta_n(L_n) \geq \inf_{\substack{\sigma \in A \\ \rho \in B} } D_R(\sigma\|\rho)
    \end{align}
\end{theorem}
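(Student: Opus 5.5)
The test I would build accepts $H_0$ whenever the empirical operator falls in a shrinking trace-norm neighborhood of $A$. For $\delta>0$ let
\[
A_\delta:=\{\tau\in\mc D_d:\ \inf_{\sigma\in A}\|\tau-\sigma\|_1<\delta\},
\]
and set $L_n:=E_n(A_{\delta_n})$ as in Definition~\ref{def:state-estimation}. Here $\delta_n\to 0$ slowly, for instance $\delta_n=n^{-1/4}$; the only requirement is $n\delta_n^2/\log n\to\infty$. Since $E_n(\cdot)$ is a sum of POVM elements, we have $0\le L_n\le \mathbf 1$ and $\mathbf 1-L_n=E_n(\mc D_d\setminus A_{\delta_n})$. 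The plan is to bound both errors by the upper bound of Corollary~\ref{cor:largeDevBound}. Its prefactor $(n+1)^{6d^2}9d^8$ does not depend on the state, so the bounds hold uniformly over $A$ and $B$.

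\emph{Type I error.} Fix $\sigma\in A$. Corollary~\ref{cor:largeDevBound} gives
\[
\Tr[(\mathbf 1-L_n)\sigma^{\otimes n}]\le (n+1)^{6d^2}9d^8\exp\Big\{-n\inf_{\tau\notin A_{\delta_n}}D_R(\tau\|\sigma)\Big\}.
\]
Every $\tau\notin A_{\delta_n}$ satisfies $\|\tau-\sigma\|_1\ge\delta_n$. By \eqref{eq:DR_FidRelation} and the Fuchs--van de Graaf inequality $F(\tau,\sigma)\le 1-\tfrac14\|\tau-\sigma\|_1^2$, we get
\[
D_R(\tau\|\sigma)\ge -\log F(\tau,\sigma)\ge -\log\big(1-\tfrac{\delta_n^2}{4}\big)\ge \tfrac{\delta_n^2}{4}.
\]
This lower bound is uniform in $\sigma\in A$. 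Hence $\alpha_n(L_n)\le 9d^8(n+1)^{6d^2}e^{-n\delta_n^2/4}\to 0$ by the choice of $\delta_n$.

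\emph{Type II error.} For $\rho\in B$, Corollary~\ref{cor:largeDevBound} gives
\[
\Tr[L_n\rho^{\otimes n}]\le 9d^8(n+1)^{6d^2}\exp\Big\{-n\inf_{\tau\in A_{\delta_n}}D_R(\tau\|\rho)\Big\}.
\]
Fix $\delta>0$. For $n$ large we have $\delta_n<\delta$, so $A_{\delta_n}\subseteq A_\delta$. Taking the supremum over $\rho\in B$ then yields
\[
\liminf_{n\to\infty}-\tfrac1n\log\beta_n(L_n)\ge c(\delta):=\inf_{\tau\in A_\delta,\ \rho\in B}D_R(\tau\|\rho),
\]
because the polynomial prefactor vanishes at exponential scale. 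It remains to let $\delta\to0$ and show $\lim_{\delta\to0}c(\delta)=c^*:=\inf_{A\times B}D_R$.

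The main obstacle is this last limit, which is where closedness of $A$ and $B$ is used. The function $c(\delta)$ is monotone nonincreasing in $\delta$ and bounded above by $c^*$, so its limit exists. Suppose the limit were some $c'<c^*$. Then we could pick $\delta_k\to0$ and pairs $(\tau_k,\rho_k)\in A_{\delta_k}\times B$ with $D_R(\tau_k\|\rho_k)\le c'+1/k$. Since $\mc D_d$ is compact, we may pass to a subsequence with $\tau_k\to\tau$ and $\rho_k\to\rho$. Closedness gives $\tau\in A$, because $\mathrm{dist}(\tau_k,A)<\delta_k\to0$ and $A$ is closed, and $\rho\in B$ because $B$ is closed. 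Joint lower semicontinuity of $D_R$ (Remark~\ref{remark:PropsOfDR}, item~\ref{item:DR_LSC}) then gives $D_R(\tau\|\rho)\le c'<c^*$, contradicting the definition of $c^*$. If $c^*=\infty$, the same argument applied with an arbitrary finite level shows $c(\delta)\to\infty$. Combining the two displays gives the claimed exponent bound.
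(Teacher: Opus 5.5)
Your proposal is correct and follows essentially the same route as the paper. Like the paper, you accept $H_0$ when the empirical operator lands in a shrinking trace-norm neighborhood of $A$. You bound the type I error with the upper bound of Corollary~\ref{cor:largeDevBound} combined with the fidelity/Fuchs--van de Graaf estimate (the paper's Proposition~\ref{prop:DR_Pinskers}), and the type II exponent with the same corollary plus compactness, closedness of $A,B$, and lower semicontinuity of $D_R$. The remaining differences are cosmetic: radius $n^{-1/4}$ instead of $n^{-1/3}$, open rather than closed neighborhoods, and an explicit subsequence argument for $\lim_{\delta\to0}c(\delta)=c^*$ where the paper uses a one-line appeal to compactness and lower semicontinuity.
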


To simplify the proof, we provide the following Pinsker-like bound on $D_R$. 
\begin{proposition}[$D_R$ and trace distance relationship]
    \label{prop:DR_Pinskers}
    For $\sigma, \rho \in \mc D_d$, 
    \begin{equation}
        D_R(\sigma\|\rho) \geq \frac{1}{4}\|\sigma -\rho\|_1^2
    \end{equation}
\end{proposition}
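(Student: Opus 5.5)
The plan is to chain the fidelity bound of Remark~\ref{remark:PropsOfDR} with the Fuchs--van de Graaf inequality. If $D_R(\sigma\|\rho)=\infty$ there is nothing to prove. Otherwise, item~\ref{item:DR_FidRelation} of Remark~\ref{remark:PropsOfDR} gives
\[
    D_R(\sigma\|\rho) \geq -\log F(\sigma,\rho),
\]
so it suffices to prove the purely fidelity-based bound $-\log F(\sigma,\rho)\ge \frac14\|\sigma-\rho\|_1^2$ for all states. This removes the minors $\Delta_k$ and the eigenbasis $U$ from the problem, so no direct analysis of \eqref{eq:reverseRelativeEntropy_Def} is needed.

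For the fidelity bound I will use the Fuchs--van de Graaf inequality in the paper's (squared) fidelity convention:
\[
    \tfrac12\|\sigma-\rho\|_1 \le \sqrt{1-F(\sigma,\rho)}, \qquad\text{so}\qquad F(\sigma,\rho)\le 1-\tfrac14\|\sigma-\rho\|_1^2 .
\]
Since $F\in[0,1]$ and $\log$ is increasing, it follows that $-\log F\ge -\log\bigl(1-\tfrac14\|\sigma-\rho\|_1^2\bigr)$. Applying the elementary bound $-\log(1-x)\ge x$ for $x\in[0,1)$ with $x=\tfrac14\|\sigma-\rho\|_1^2\in[0,1]$ then yields the claim.

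Only the endpoint $x=1$ needs separate treatment, which happens when $\|\sigma-\rho\|_1=2$. In that case $\sigma$ and $\rho$ have orthogonal supports, so $F=0$ and $-\log F=\infty$, and the inequality holds trivially.

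The main point to check is which fidelity convention Fuchs--van de Graaf is stated in. With squared fidelity $F$, the inequality is $1-\sqrt F\le \frac12\|\cdot\|_1\le\sqrt{1-F}$, and only the upper half is needed, so the constant $\frac14$ comes out exactly. If some care is needed on the support condition, I note that $D_R=\infty$ cases are trivial. Finite $D_R$ already gives finite $-\log F$ through \eqref{eq:DR_FidRelation}, so no further case analysis is required.
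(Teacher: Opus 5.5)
Your proposal is correct and is essentially the paper's proof: you chain $D_R(\sigma\|\rho)\ge -\log F(\sigma,\rho)$ from Remark~\ref{remark:PropsOfDR} with the Fuchs--van de Graaf inequality and an elementary logarithm bound. The only difference is ordering: the paper writes $-\log F\ge 1-F\ge \frac14\|\sigma-\rho\|_1^2$, while you apply Fuchs--van de Graaf first and then use $-\log(1-x)\ge x$, which is the same elementary inequality.
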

\begin{proof}
    \begin{align*}
        D_R(\sigma\|\rho) &\geq -\log F(\sigma, \rho) \geq 1 - F(\sigma, \rho) \geq \frac{1}{4} \|\sigma - \rho\|_1^2.
    \end{align*}
    The first inequality uses the fidelity bound in Remark \ref{remark:PropsOfDR}, and the last inequality is the Fuchs--van de Graaf inequality, e.g. \cite{Wilde_2016}. 
\end{proof}

\begin{proof}[Proof of Theorem \ref{thm:CompositeQuantumHypTestBound}]
    Suppose we have a quantum method of types, described by $\hat \Sigma_n, \mc M_n$, as in Theorem \ref{theorem:QuantMethodOfTypes}, with associated state estimation scheme $E_n$.
    
    Define 
    \begin{align*}
        A_n &:= \left\{ \hat \tau_n \in \mc D_d: \|\hat \tau_n - \sigma\|_1 \leq n^{-1/3}, \text{ for some }\sigma \in A\right\} \\
        T_n &:= A_n \cap \hat \Sigma_n\\
        T_n^C &:= A_n' \cap \hat \Sigma_n,
    \end{align*}
    where $A_n'$ denotes the complement of $A_n$. Let $L_n = E_n(T_n)$ be the effect for our test.

    Our type I error is given by 
    \begin{align*}
        \alpha_n(L_n) &= \sup_{\sigma \in A} \Tr[(1-L_n) \sigma^{\otimes n}]\\
        &\leq \sup_{\sigma \in A} \poly(n) \exp\left\{-n \min_{\tau \in T_n^C} D_R(\tau\|\sigma)\right\}\\
        &\leq \sup_{\sigma \in A} \poly(n) \exp\left\{-\frac{n}{4} \min_{\tau \in T_n^C} \|\tau - \sigma\|_1^2 \right\}\\
        &\leq \poly(n) \exp\left\{-\frac{n^{1/3}}{4}\right\} \to 0.
    \end{align*}
    where we made use of Proposition \ref{prop:DR_Pinskers} in the second inequality. 
    
    When analyzing the type II error, we get 
    \begin{align*}
        \beta_n(L_n) &= \sup_{\rho \in B} \mathbb{P}_{\mc M_n}[T_n|\rho]\\
        &\leq \poly(n) \exp\{-n \inf_{\substack{ \sigma \in A_n\\ \rho \in B}} D_R(\sigma\|\rho)\}  .
    \end{align*}
    Therefore, 
    \begin{equation*}
        \liminf_{n\to \infty} - \frac{1}{n} \log \beta_n(L_n) \geq \liminf_{n \to \infty}\  \inf_{\substack{\sigma \in A_n \\ \rho \in B}} D_R(\sigma\|\rho).
    \end{equation*}

    For any $n_1 \geq n_2$, we have $A_{n_1} \subset A_{n_2}$, and $\bigcap_{n=1}^{\infty} A_n = A$. Therefore, since $A_n \times B$ is compact, and by Remark \ref{remark:PropsOfDR} we know that $D_R$ is lower-semicontinuous, we get 
    \[
        \liminf_{n \to \infty} \inf_{\substack{\sigma \in A_n\\ \rho \in B}} D_R(\sigma\|\rho) = \inf_{\substack{\sigma \in A\\ \rho \in B}}D_R(\sigma\|\rho)
    \]
\end{proof}

We note that the result of Theorem \ref{thm:CompositeQuantumHypTestBound} should be interpreted as a simplification of what could already be obtained from existing large-deviation results \cite{Keyl_2006}. Our contribution is a finite-outcome measurement  obtaining the result. 

For comparison, quantum Stein's lemma \cite{hiai1991proper,ogawa2000strong} shows that the quantum relative entropy is the proper type II error rate for a simple null and alternative.
Versions of quantum Sanov's theorem \cite{bjelakovic2005quantum, Notzel2014Invariant} extend this to the composite null and simple alternative, showing that the infimum over the quantum relative entropy is still the proper error exponent. However, $D_R(\sigma\|\rho) < D(\sigma\|\rho)$ whenever $D(\sigma\|\rho) < \infty$ and $[\sigma, \rho] \neq 0$ \cite{GMPW_2026}. Recent work by Hayashi and Fang \cite{hayashi2026operational} shows that $D_R$ is the best achievable type II rate for certain composite hypothesis testing problems. In light of this, Theorem \ref{thm:CompositeQuantumHypTestBound} gives an alternative method of showing the achievability result in their work. 

\bibliographystyle{marcotomPB}
\bibliography{optimal}
\end{document}